\newcommand{\remove}[1]{}
\begin{document}

\newcommand{\UKMPI}{(\clientset, ~\facilityset,~\textit{c},~\textit{f},~\capacity, ~\budget)}			
\newcommand{\InstanceUKMP}{(\clientset, ~\facilityset,~\textit{c},~\textit{f},~\capacity, ~\budget)}	

\newcommand{\UkFLPI}{(\clientset,~\facilityset,~\textit{c},~\textit{f},~\capacity,~k)}					
\newcommand{\InstanceUKFLP}{(\clientset,~\facilityset,~\textit{c},~\textit{f},~\capacity,~k)}			
\newcommand{\UFLPI}{(\clientset,~\facilityset,~\textit{c},~\textit{f},~\capacity)}	

\newcommand{\starinst}[5]{\mathcal{#1}_{#2}(#2,~#3,~#4,~#5)}	

\newcommand{\starinstflp}[4]{\mathcal{#1}_#2(#2,~#3,~#4)}												
\newcommand{\InstanceSTAR}[5]{\mathcal{#1}_#2(#2,~#3,~#4,~#5)}											



\newcommand{\overbar}[1]{\mkern 1.5mu\overline{\mkern-1.5mu#1\mkern-1.5mu}\mkern 1.5mu}			

\newcommand{\thetajjp}[2]{\theta({#1},{#2})}


\newcommand{\LPUKMP}{\textbf{LP(UKMP)}}			


\newcommand{\facilityset}{\mathcal{F}}	
\newcommand{\facilitysetone}{\mathcal{F}^1}	
\newcommand{\facilitysettwo}{\mathcal{F}^2}		
\newcommand{\facset}{\mathcal{F}}				
\newcommand{\opfacset}{\mathcal{A}}	
\newcommand{\opfacsettwo}{\mathcal{A}^2}	
\newcommand{\temp}{\mathcal{T}}	

\newcommand{\clientset}{\mathcal{C}}			
\newcommand{\cliset}{\mathcal{C}}				

\newcommand{\si}[1]{\mathcal{S}_#1}				
\newcommand{\sset}[1]{\mathcal{S}_#1}			

\newcommand{\budget}{\mathcal{B}}
\newcommand{\budgetone}{\mathcal{B}^1}
\newcommand{\budgettwo}{\mathcal{B}^2}				
\newcommand{\budgetset}[1]{\mathcal{B}_{#1}}

\newcommand{\capacity}{\textit{u}}				
\newcommand{\capacityU}{\textit{U}}				

\newcommand{\crich}{\mathcal{C}_D}				
\newcommand{\cpoor}{\mathcal{C}_S}				
\newcommand{\cdense}{\mathcal{C}_D}				
\newcommand{\csparse}{\mathcal{C}_S}
\newcommand{\csparseprime}{\mathcal{C}'_S}
\newcommand{\csparsetilde}{\mathcal{\tilde{C}}_S}
\newcommand{\csparsegroup}[1]{\mathcal{C}_{S,#1}}
\newcommand{\facilitysetgroup}[1]{\mathcal{F}_{#1}}
\newcommand{\csone}{\mathcal{C}_S^1}		
\newcommand{\cstwo}{\mathcal{C}_S^2}
\newcommand{\cdone}{\mathcal{C}_D^1}	

\newcommand{\cdensegood}{\mathcal{C}^D_{Good}}	
\newcommand{\cdensebad}{\mathcal{C}^D_{Bad}}	
\newcommand{\cminusl}{\mathcal{\tilde{C}}}	
\newcommand{\csparseminusl}{\mathcal{\tilde{C}_S}}

\newcommand{\cdensegoodname}{\textit{Good Dense}}	
\newcommand{\cdensebadname}{\textit{Bad Dense}}		
\newcommand{\csparsename}{\textit{Sparse}}
\newcommand{\cdensename}{\textit{Dense}}

\newcommand{\csparseplusbad}{\overline{\mathcal{C}}}		
\newcommand{\leafnodes}{\mathcal{L}}		



\newcommand{\ballofj}[1]{\textit{\mathcal{B}($ #1 $)}} 	

\newcommand{\C}[1]{\hat{C_{#1}}}					
\newcommand{\hatofC}[1]{\hat{C_{#1}}}				
\newcommand{\hatof}[2]{\hat{{#1}_{#2}}}				

\newcommand{\bundle}[1]{\mathcal{N}_{#1}}			
\newcommand{\neighbor}[1]{\mathcal{N}_{#1}}
\newcommand{\p}[1]{\mathcal{N}_{#1}}
\newcommand{\nnew}[1]{\mathcal{N}'_{#1}}
\newcommand{\Njincbar}[1]{\delta({#1})}				
\newcommand{\Tp}[1]{\mathcal{T}'({#1})}
\newcommand{\TC}{\mathcal{TC}}

\newcommand{\cen}[1]{G_{#1}}
\newcommand{\sj}[1]{\mathcal{S}_{#1}}
\newcommand{\s}{\mathcal{S}}	

\newcommand{\dist}[2]{c(#1,~#2)}
\newcommand{\dists}[2]{c_{s}(#1,~#2)}					

\newcommand{\facilitycost}{\textit{$ f_i $}}		
\newcommand{\faccosti}{\textit{$ f_i $}}			
\newcommand{\faccost}[1]{\textit{$ f_#1 $}}			

\newcommand{\A}[2]{\mathcal{A}_{\rho^*}(#1,{#2})}						

\newcommand{\x}[3]{x_{#1}(#2,#3)}					
\newcommand{\X}[2]{x_{#1#2}}						
\newcommand{\Xstar}[2]{x^*_{#1#2}}					

\newcommand{\bard}[1]{{d_{#1}}}			

\newcommand{\hatofbjc}[1]{\hat{b}^c_{#1}}


\newcommand{\sumlimits}[2]{\displaystyle\sum\limits_{#1}^{#2}}												
\newcommand{\unionlimits}[2]{\displaystyle\bigcup\limits_{#1}^{#2}}

\newcommand{\sumofx}[1]{\sum_{#1 \in \clientset} x_{i#1}}													
\newcommand{\sumxoverclients}[2]{\displaystyle\sum\limits_{#2 \in \clientset} x_{#1#2}}						

\newcommand{\sumxioverclients}[1]{\displaystyle\sum\limits_{#1 \in \clientset} x_{i#1}}						
\newcommand{\sumofvaronf}[2]{\sum_{i \in \facilityset'} #1_{#2j}}											

\newcommand{\sumap}[1]{\sum_{#1}{}} 


\newcommand{\price}[1]{b^c_{#1}}
\newcommand{\pricef}[1]{b^f_{#1}}
\newcommand{\budgett}[1]{b_{#1}}

\newcommand{\load}[1]{l_{#1}}

\newcommand{\mass}[2]{size(#1,#2)}
\newcommand{\Load}[1]{load(#1)}
\newcommand{\massofbundle}[3]{\sum_{#1 \in #2}{} #3_#1}
\newcommand{\FC}[1]{\sum_{\singlefacility \in \facilityset} \facilitycost  #1_i}
\newcommand{\FCofbundle}[3]{\sum_{#1 \in #2} f_#1 #3_#1}
\newcommand{\CC}[1]{\sum_{\singlefacility \in \facilityset} \sum_{j \in \clientset} \dist{i}{j} #1_{ij}}
\newcommand{\CCa}{\sum_{i,j} \demandofj{j} \dist{i}{j}}

\newcommand{\CCofBundle}[1]{\sum_{i \in \facilityset} \dist{#1}{j} x^*_{#1j}}
\newcommand{\kmed}{\textit{k}}     

\newcommand{\singleclient}{\textit{j}}
\newcommand{\singlefacility}{\textit{i}}
\newcommand{\demand}[1]{\tilde{l_{#1}}}
\newcommand{\hatofdemand}[1]{\hat{l_{#1}}}
\newcommand{\zofi}[1]{z_#1}
\newcommand{\hatzofi}[1]{\hat{z}_#1}
\newcommand{\primezofi}[1]{z'_{#1}}
\newcommand{\tildezofi}[1]{\tilde{z}_#1}

\newcommand{\clientsetbundle}[1]{\mathcal{C}^{#1}}
\newcommand{\radiusballofj}{\dist{i}{j} \leq 2 \C{j} }
\newcommand{\demandofj}[1]{\textit{$ d_{#1} $}}
\newcommand{\neighbourhood}[1]{\mathbb{N}(#1)}
\newcommand{\shtree}{\cal{F}}
\newcommand{\BCtree}{\hat{\cal{F}}}
\newcommand{\CCLPtwo}[1]{\sum_{i \in \facilityset',j \in \clientsetbundle{'}} \demandofj{j} \dist{i}{j}  #1_{ij}}
\newcommand{\priceofcorf}[2]{b_{#1}^{#2}}
\newcommand{\hatofy}[1]{\hat{y_#1}}
\newcommand{\hatofx}[2]{\hat{x}_{#1#2}}
\newcommand{\barofclientsetbundle}{\bar{\mathcal{C}}}

\newcommand{\obj}[1]{\mathcal{C}ostKM(#1)} 
\newcommand{\objflp}[1]{\mathcal{C}ostkFLP_{sp}(#1)} 
\newcommand{\objflpSD}[1]{\mathcal{C}ostkFLP(#1)} 

\newcommand{\Costone}[1]{Cost^1(#1)}
\newcommand{\Costtwo}[1]{Cost^2(#1)}
\newcommand{\cost}[2]{Cost_{#1}(#2)} 
\newcommand{\sigmabar}{\bar{\sigma}}
\newcommand{\sigmahat}{\hat{\sigma}}
\newcommand{\costsingle}[1]{Cost(#1)}
\newcommand{\costjprimeinc}[3]{Cost_{#1}(#2,~#3)}
\newcommand{\loadjinc}[2]{\phi(#1,~#2)}

\newcommand{\costFLPThree}[3]{Cost_{#1}(#2,~#3)}
\newcommand{\costtwo}[2]{Cost(#1,~#2)}
\newcommand{\costthree}[3]{Cost_{#1}(#2,~#3)}
\newcommand{\costFLPFour}[4]{Cost_{#1}(#2,~#3,~#4)}
\newcommand{\cardTwo}[1]{Y^*(#1)}
\newcommand{\totalcost}[2]{Cost(#1,~#2)}
\newcommand{\etal}{\textit{et al}.}
\newcommand{\ie}{\textit{i}.\textit{e}.}
\newcommand{\eg}{\textit{e}.\textit{g}.}
\newcommand{\Cost}[1]{Cost(#1)}

\newcommand{\soner}[1]{s^1_{#1}}
\newcommand{\stwor}[1]{s^2_{#1}}
\newcommand{\Soner}[1]{S^1_{#1}}
\newcommand{\Stwor}[1]{S^2_{#1}}

\newcommand{\mcone}[1]{G^1_{#1}}
\newcommand{\mctwo}[1]{G^2_{#1}}
\newcommand{\denser}[1]{j_{d_{#1}}}
\newcommand{\dense}{j_d}
\newcommand{\sparseoner}[1]{j_{s1_{#1}}}
\newcommand{\sparseone}{j_{s}}
\newcommand{\sparsetwor}[1]{j_{s2_{#1}}}
\newcommand{\sparsetwo}{j_{s2}}
\newcommand{\That}[1]{\hat{\tau}({#1})}
\newcommand{\muj}[1]{\mu({#1})}
\newcommand{\resj}[1]{res({#1})}
\newcommand{\floordjbyu}[1]{\lfloor{d_{#1}/u}\rfloor}

\newcommand{\tauhat}[1]{\hat{\tau}({#1})}

\newcommand{\sone}{S_1}
\newcommand{\stwo}{S_2}
\newcommand{\sthree}{S_3}
\newcommand{\Gralpha}{G_{r_\alpha}}
\newcommand{\Grbeta}{G_{r_\beta}}

\newcommand{\sigmaone}{\psi}
\newcommand{\sr}[1]{s_{#1}}
\newcommand{\Sr}[1]{S_{#1}}

\newcommand{\B}{B}

\newcommand{\clirem}{\mathcal{C}_r}
\newcommand{\clipen}{\mathcal{C}_p}
\newcommand{\cliout}{\mathcal{C}_o}
\newcommand{\opt}[1]{LP_{opt}}
\newcommand{\lp}[1]{LP_{#1}}
\newcommand{\scost}{Cost_s}
\newcommand{\pcost}{Cost_p}
\newcommand{\fcost}{Cost_f}

\newcommand{\MC}{Meta-Cluster }
\newcommand{\MCs}{Meta-Clusters }

\newcommand{\starj}{\facilityset_j}
\newcommand{\starjp}{\facilityset_{j'}}
\newcommand{\distd}[2]{c'(#1,~#2)}
\newcommand{\cfull}{\cliset_{full}}
\newcommand{\cpart}{\cliset_{part}}
\newcommand{\cstar}{\cliset^*}
\newcommand{\ballj}{\mathcal{B}_j}
\newcommand{\dljp}{\mathcal{D}_{\ell_{j'}}}
\newcommand{\Tau}{\mathcal{T}}

\newcommand{\dlj}[1]{r\mathcal{F}_{j_{#1}}}

\newcommand{\rFj}[1]{rF_{j}_{#1}}
\newcommand{\level}[1]{\rFj_#1}
\newcommand{\rhobar}{\bar\rho}
\newcommand{\lb}[1]{\mathcal{L}_{#1}}
\newcommand{\AVG}[1]{\mathcal{A}_{#1}}
\newcommand{\radj}[1]{\mathcal{R}_{#1}}
\newcommand{\cluster}[1]{\mathcal{P}_{#1}}
\newcommand{\F}[1]{\mathcal{F}_{#1}}
\newcommand{\iofj}{i_j}
\newcommand{\N}[1]{\mathcal{N}_{#1}}
\newcommand{\iclients}{\clientset_i}
\newcommand{\Out}[1]{\mathcal{O}_{#1}}
\newcommand{\R}[1]{\mathcal{R}_{#1}}
\newcommand{\rjmax}{rmax_j}
\newcommand{\T}[1]{\mathcal{T}_{#1}}
\newcommand{\cli}[1]{j_{}}
\newcommand{\ball}[1]{\mathcal{B}_{#1}}
\newcommand{\floor}[1]{\lfloor{#1}\rfloor}
\newcommand{\rtj}[1]{r\mathcal{T}_{j_{#1}}}

\title{On Variants of Facility Location Problem with Outliers}

\titlerunning{Variants of Facility Location Problem with Outliers}

\author{Rajni Dabas\inst{1} \and
Neelima Gupta\inst{1}}

\authorrunning{R. Dabas and N. Gupta}

\institute{Department of Computer Science, University of Delhi, India\\
\email{rajni@cs.du.ac.in, ngupta@cs.du.ac.in}}

\maketitle             
\begin{abstract}
In this work, we study the extension of two variants of the facility location problem (FL) to make them robust towards a few distantly located clients. First, $k$-facility location problem ($k$FL), a common generalization of  FL  and $k$ median problems, is a well studied problem in literature. In the second variant, lower bounded facility location (LBFL),  we are given a bound on the minimum number of clients that an opened facility must serve. Lower bounds are required in many applications like profitability in commerce and load balancing in transportation problem. In both the cases, the cost of the solution may be increased grossly by a few distantly located clients, called the outliers. Thus, in this work, we extend $k$FL and LBFL to make them robust towards the outliers.
For $k$FL with outliers ($k$FLO) we present the first (constant) factor approximation violating the cardinality requirement by +1. As a by-product, we also obtain the first approximation for FLO based on LP-rounding. For LBFLO, we present a tri-criteria solution with a trade-off between the violations in lower bounds and the number of outliers. With a violation of $1/2$ in lower bounds, we get a violation of $2$ in outliers.

\keywords{ Facility Location  \and Outliers \and Approximation \and Lower Bound \and $k$-Facility Location \and $k$-Median.}
\end{abstract}

\section{Introduction}
Consider an e-retail company that wants to open warehouses in a city for home delivery of essential items. Each store has an associated opening cost depending on the location in the city. The aim of the company is to open these warehouses at locations such that the cost of opening the warehouses plus the cost servicing all the customers in the city from the nearest opened store is minimised. In literature, such problems are called {\em facility location problems}(FL) where warehouses are the facilities and customers are the clients. Formally, in FL we are given a set $\facilityset$ of $n$ facilities and a set $\cliset$ of $m$ clients. Each facility $i \in \facilityset$ has an opening cost $f_i$ and cost of servicing a client $j \in \cliset$ from a facility $i \in \facilityset$ is $c(i,j)$ (we assume that the service costs are metric). The goal is to open a subset $\facilityset' \subseteq \facilityset$ of facilities such that the cost of opening the facilities and servicing the clients from the opened facilities is minimised.
In a variant of FL, called $k$-facility location problem ($k$FL), we are  given an additional bound $k$ on the maximum number of warehouses/facilities that can be opened i.e.  $|\facilityset'| \le k$. In our example this requirement may be imposed to maintain the budget constraints or to comply with government regulations. In another variant of the problem, we are required to serve some minimum number of customers/clients from an opened facility. Such a requirement is natural to ensure profitability in our example.  
This minimum requirement is captured as lower bounds in facility location problems. That is, in lower bounded FL (LBFL), we are also given a lower bound $\lb{i}$ on the minimum number of clients that an opened facility $i$ must serve.

In the above scenarios, a few distant customers/clients can increase the cost of the solution disproportionately; such clients are called {\em outliers}.
Problem of outliers was first introduced by Charikar \etal~\cite{charikar2001algorithms} for the facility location and the $k$-median problems. In this paper we extend $k$-facility location and lower bounded facility location to deal with the outliers and denote them by $k$FLO and LBFLO respectively. Since FL is well known to be NP-hard, NP-hardness of $k$FLO and LBFLO follows. We present the first (constant factor) approximation for $k$FLO  opening at most $k+1$ facilities. In particular, we present the following result:

\begin{theorem}
\label{thm_flo}
There is a polynomial time algorithm that approximates $k$-facility location problem with outliers opening at most $(k+1)$ facilities within $11$ times the cost of the optimal solution.
\end{theorem}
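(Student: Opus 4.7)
My plan is to design an LP-rounding approximation that adapts the Charikar--Guha filtering technique for facility location while accommodating both the cardinality constraint and outliers, exploiting the standard observation that allowing one extra facility closes the integrality gap for $k$-median-style problems. I would begin by formulating the natural LP relaxation of $k$FLO with fractional assignment variables $x_{ij}$, opening variables $y_i$, and outlier-indicator variables $z_j$, subject to $\sum_i x_{ij}+z_j\ge 1$, $x_{ij}\le y_i$, $\sum_i y_i\le k$, and $\sum_j z_j\le t$, where $t$ is the allowed number of outliers. After solving this LP to obtain $(x^*,y^*,z^*)$, I would compute the fractional service cost $C_j=\sum_i c(i,j)\,x^*_{ij}$ for every client with $z^*_j<1$. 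Rescaling so that each such client has total served mass $1$ (incurring only a constant-factor blow-up in service cost), I would build a bundle $\mathcal{N}_j$ consisting of the closest fractional facilities around $j$ that accumulate opening mass $1$, in the style of Shmoys--Tardos--Aardal.

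Next I would carry out a greedy clustering: processing clients in nondecreasing order of $C_j$, a client becomes a cluster center precisely when its bundle is disjoint from that of every previously chosen center, otherwise it attaches to an overlapping center. By the triangle inequality, every client attached to center $j$ lies within $O(C_j)$ of any facility in $\mathcal{N}_j$. I would then classify cluster centers as \emph{dense} (served almost fully by the LP, so $z^*_j$ is small) or \emph{sparse} (with significant outlier fraction), using a suitable threshold. For every dense cluster center I can afford to open the cheapest facility in its bundle using Jain--Vazirani-style charging; the disjointness of bundles together with $\sum_i y^*_i\le k$ bounds the number of such openings by $k$, while all clients attached to dense centers are served at cost $O(C_j)$.

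The main obstacle, and the source of the ``$+1$'' in the cardinality, is the treatment of the sparse cluster centers: each of these must either be declared an outlier or have a facility opened nearby to serve its attached non-outlier clients, and the outlier budget $t$ must be respected while the total number of opened facilities stays at most $k+1$. I would set up an auxiliary structure on the sparse cluster centers, essentially a small forest or bipartite graph encoding which centers can \emph{share} an opened facility with only constant service-cost blow-up, and solve a residual LP on it. An extreme-point analysis in the spirit of the Charikar--Li pseudo-approximation for $k$-median should ensure that at most one fractional variable survives, which can then be rounded up by opening a single additional facility; this accounts for the $k+1$ bound. Finally, I would aggregate the constant factors coming from the rescaling step, the bundle-based service-cost bounds, the facility-opening charging arguments for dense and sparse clusters, and the extra-facility rounding, to verify that the combined blow-up is at most $11$; along the way, setting $k$ to be irrelevant recovers the claimed LP-rounding approximation for FLO as a by-product.
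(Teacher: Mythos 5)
There is a genuine gap, and it sits exactly where you acknowledge the ``main obstacle'' to be. Your treatment of the sparse cluster centers --- ``an auxiliary structure\ldots solve a residual LP\ldots an extreme-point analysis in the spirit of Charikar--Li should ensure that at most one fractional variable survives'' --- is an assertion, not an argument, and it is the entire technical content of the theorem. In the paper this step is an auxiliary LP over opening variables only, with per-client ball constraints, a disjoint core set of clients, and an \emph{iterative} rounding loop in which the balls shrink as constraints become tight; the counting argument there (tight equality constraints need two fractional variables each, plus the cardinality and outlier-coverage constraints) yields at most \emph{two} surviving fractional facilities, not one. Your one-shot filtering with a single residual LP faces the same obstruction: when both the constraint $\sum_i y_i\le k$ and the coverage constraint $\sum_j z_j\le t$ are tight at an extreme point, two fractional openings can survive, so ``at most one fractional variable'' is unproven and most likely false as stated; you would at minimum need to redo the rank/counting argument for your residual polytope, and your sketch gives no structure (disjointness of which sets, which constraints are equalities) from which such a count could be extracted.

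The second gap is facility cost of the rounded-up facility and the integrality gap. Unlike $k$-median, rounding a fractional opening up to $1$ here incurs real facility cost, and the standard LP for FLO is known to have \emph{unbounded} integrality gap (Charikar et al.), so no argument that only compares against $LP_{opt}$ can give a constant factor --- in particular your claimed FLO by-product cannot follow from your outline. The paper circumvents this by guessing the most expensive facility $f_{max}$ opened by the optimum, paying the extra openings with an additive $f_{max}\le OPT$, and running over all guesses; your proposal has no analogous mechanism, and ``allowing one extra facility'' does not close this gap, since the gap comes from facility costs interacting with the outlier budget, not from the cardinality constraint. Finally, the constant $11$ in the paper is a concrete accounting ($5$ from routing a full client to an open facility via a chain of responsible centers, times $2$ from distance discretization, plus $f_{max}$); your ``aggregate the constants'' step would have to be carried out explicitly, and with the pieces you describe (rescaling, bundle radii, dense/sparse charging, extra facility) there is no reason the total lands at $11$ rather than some other constant.
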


Our result is obtained using LP rounding techniques. As a by product, we get first constant factor approximation for FLO using LP rounding techniques. FLO is shown to have an unbounded integrality gap~\cite{charikar2001algorithms} with solution to the standard LP. We get around this difficulty by guessing the most expensive  facility opened in the optimal solution. 
In particular we get the following:

\begin{corollary}
\label{coro-flo}
There is a polynomial time algorithm that approximates facility location problem with outliers within $11$ times the cost of the optimal solution.
\end{corollary}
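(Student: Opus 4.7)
The plan is to invoke Theorem~\ref{thm_flo} with $k := |\facilityset|$, turning the $k$FLO cardinality bound into a vacuous constraint. With this choice, every FLO solution is automatically $k$FLO-feasible and vice versa, so the two optima coincide, and the $(+1)$ additive cardinality violation permitted by Theorem~\ref{thm_flo} is absorbed for free~--~one cannot open more than $|\facilityset|$ distinct facilities. Hence running the algorithm of Theorem~\ref{thm_flo} on the instance and returning its output already yields a polynomial-time $11$-approximation for FLO, with no extra rounding or post-processing required at the corollary level.

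The only non-trivial ingredient, precisely the obstruction flagged in the paragraph between the theorem and the corollary, is that the standard LP relaxation for FLO has an unbounded integrality gap. To deploy the LP-rounding machinery of Theorem~\ref{thm_flo} in this regime, I would first enumerate over the $O(|\facilityset|)$ candidates for the identity of the most expensive facility $i^{\ast}$ opened in the optimal FLO solution, restrict the instance to facilities of opening cost at most $f_{i^{\ast}}$, and force $i^{\ast}$ to be open. For the correct guess, the restricted LP no longer suffers the blow-up (the costliest opened facility is pinned, so the LP cost is honestly bounded by $\text{OPT}_{\text{FLO}}$), and the same rounding scheme that works for $k$FLO applies verbatim. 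Taking the best solution across all guesses incurs only a polynomial multiplicative overhead, preserving polynomiality.

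The main obstacle I expect is not the reduction itself but verifying that the $k$FLO rounding analysis of Theorem~\ref{thm_flo} transfers cleanly under this ``pin the most expensive facility'' reformulation: one must check that forcing $i^{\ast}$ open and truncating $\facilityset$ do not interact badly with any cluster-based or ball-based analysis used to establish the factor of $11$, and that outlier selection in the rounded solution remains consistent with the LP's outlier selection on the restricted instance. Given Theorem~\ref{thm_flo}, this should be a bookkeeping step rather than a new idea, but it is where any hidden constant loss would have to be absorbed to land exactly at $11$ rather than $11+O(1)$.
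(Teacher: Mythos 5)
Your proposal is correct and matches the paper's route: the corollary is obtained by specializing Theorem~\ref{thm_flo} (take $k=|\facilityset|$, so the $+1$ cardinality violation is vacuous), and the guess of the most expensive facility needed to beat the unbounded integrality gap is already built into the algorithm of Theorem~\ref{thm_flo} itself, which enumerates all guesses of $f_{max}$ and bounds the cost by $5\cdot 2\cdot \opt{LP}+f_{max}\le 11\,OPT$. The only difference is cosmetic: you re-introduce the guessing step (pinning $i^{\ast}$ and truncating $\facilityset$) at the corollary level, whereas the paper performs it once inside the $k$FLO rounding, so no additional transfer argument is needed.
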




We reduce LBFLO  to FLO and use any algorithm to approximate FLO to obtain a tri-criteria solution for the problem. 
To the best of our knowledge, no result is known for LBFLO in literature. In particular, we present our result in Theorem~\ref{thm_LBkFLO} where a tri-criteria solution is defined as follows:
 
 \begin{definition}
A tri-criteria solution for LBFLO is an $(\alpha, \beta, \gamma)$- approximation solution $S$ that violates lower bounds by a factor of $\alpha$ and outliers by a factor of $\beta$ with cost no more than $\gamma OPT$ where $OPT$ denotes the cost of an optimal solution of the problem, $\alpha<1$ and $\beta>1$.
\end{definition}

\begin{theorem}
\label{thm_LBkFLO}
A polynomial time $(\alpha,\frac{1}{1-\alpha}, \lambda(\frac{1+\alpha}{1-\alpha})$-approximation can be obtained for LBFLO problem where $\alpha=(0,1)$ is a constant and $\lambda$ is an approximation factor for the FLO problem.
\end{theorem}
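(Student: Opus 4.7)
The plan is a black-box reduction from LBFLO to FLO, using any $\lambda$-approximation algorithm for FLO as a subroutine. The key observation is that removing the lower bound constraints from an LBFLO instance yields an FLO instance with the same outlier budget $t$; hence every LBFLO-feasible solution is FLO-feasible, so the FLO optimum is at most $OPT$. Running the $\lambda$-approximation on the induced FLO instance therefore produces a solution $\sigma$ using at most $t$ outliers with $\costsingle{\sigma}\le \lambda\cdot OPT$.

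Next, I post-process $\sigma$ to enforce $\alpha$-violated lower bounds. For each opened facility $i$, call $i$ \emph{heavy} if it is assigned at least $\alpha\,\lb{i}$ clients in $\sigma$ and \emph{light} otherwise; retain the heavy facilities (which already meet the $\alpha$-violated bound) and close the light ones. For each closed light facility $i$, its clients $C_i$ (satisfying $|C_i|<\alpha\,\lb{i}$) are redistributed by a mixed rule: some are \emph{reassigned} to the nearest open heavy facility, adding travel cost that is bounded via the triangle inequality, while the remainder are \emph{declared outliers}, consuming extra outlier budget. A bundling/charging argument choosing this split carefully ensures that the total outlier count is at most $t/(1-\alpha)$ and the extra reassignment cost is at most $\frac{2\alpha}{1-\alpha}\cdot\costsingle{\sigma}$.

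Summing the two contributions yields a final cost of $\frac{1+\alpha}{1-\alpha}\cdot\costsingle{\sigma}\le \lambda\cdot OPT\cdot\frac{1+\alpha}{1-\alpha}$, delivering the stated $(\alpha,\; 1/(1-\alpha),\; \lambda(1+\alpha)/(1-\alpha))$-approximation. The hardest step is the bundling/charging argument: for each light $i$, I must exhibit a nearby open heavy facility able to absorb $C_i$, and the total reassignment cost must be amortized against the service cost already paid by $\sigma$. The factors $1/(1-\alpha)$ and $(1+\alpha)/(1-\alpha)$ emerge naturally from the ratio of the $\alpha$-fraction deficit at light facilities to the $(1-\alpha)$-fraction of the lower bound that is satisfied, with the outlier slack and the reassignment slack scaling in parallel.
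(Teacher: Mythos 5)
There is a genuine gap: your reduction uses the \emph{unmodified} facility costs, and the whole argument then hinges on the claim that the reassignment cost of a closed light facility's clients ``can be amortized against the service cost already paid by $\sigma$'' and that the new outliers number at most $\frac{\alpha}{1-\alpha}t$. Neither claim has a supporting mechanism, and the first is false in general: a light facility $i$ may have opening cost $0$ and serve a few clients at distance $0$, while every other open facility is at an arbitrarily large distance $D$; then $\sigma$ pays essentially nothing on behalf of these clients, yet reassigning them costs about $|C_i|\cdot D$, which cannot be bounded by any constant times $Cost(\sigma)$. Similarly, declaring some of $C_i$ outliers must be charged against \emph{existing} outliers, but in your instance nothing guarantees that any outliers of $\sigma$ lie anywhere near $i$, so the bound $t/(1-\alpha)$ does not follow from ``choosing the split carefully.''

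The paper closes exactly this gap with two devices you are missing. First, the FLO instance $I'$ is built with \emph{inflated} opening costs $f'(i)=f(i)+\delta\sum_{j\in \N{i}}c(i,j)$, where $\N{i}$ is the set of the $\lb{i}$ nearest clients of $i$ and $\delta=\frac{2\alpha}{1-\alpha}$; since any LBFLO-feasible optimum serves at least $\lb{i}$ clients from each open $i$, the inflated optimum is within a $(1+\delta)=\frac{1+\alpha}{1-\alpha}$ factor of $OPT$ (this is where the factor enters, not in the post-processing). Second, the post-processing works inside $\N{i}$: because $i$ is light, at least $(1-\alpha)\lb{i}$ clients of $\N{i}$ are served elsewhere ($\R{i}$) or are outliers ($\Out{i}$); the nearest $j'\in\R{i}$ satisfies $c(i,j')\le \sum_{j\in\N{i}}c(i,j)/|\R{i}|$, which gives a nearby open facility to absorb the reassigned clients with cost charged to the prepaid term $\delta\sum_{j\in\N{i}}c(i,j)$ in $f'(i)$, and the clients of $i$ are split between reassignment and outliers in proportion $|\R{i}|:|\Out{i}|$, so the new outliers are charged to $\Out{i}$, yielding the $\frac{1}{1-\alpha}$ violation. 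Without the cost inflation and the $\N{i}$-based charging, the skeleton you describe (close light facilities, split their clients) cannot be made to produce the stated bounds.
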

Theorem~\ref{thm_LBkFLO} presents a trade-off between the violations in the lower bounds and that in the number of outliers. Violation in outliers can be made arbitrarily small by choosing $\alpha$ close to $0$. And, violation in lower bounds can be chosen close to $1$ at the cost of increased violation in the outliers. Similar result can be obtained for LB$k$FLO with $+1$ violation in cardinality using Theorem~\ref{thm_flo}. The violation in the cardinality comes from that in $k$FLO.

\textbf{Our Techniques: }
For $k$FLO, starting with an LP solution $\rho^* = <x^*, y^*>$, we first eliminate the $x^*_{ij}$ variables and work with an auxilliary linear programming (ALP) relaxation involving only $y_i$ variables. This is achieved by converting $\rho^*$ into a  {\em complete solution} in which either $x^*_{ij}=y^*_{i}$ or $x^*_{ij}=0$. Using the ALP, we identify the set of facilities to open in our solution. ALP is solved using iterative rounding technique to give a pseudo-integral solution (a solution is said to be pseudo integral if there are at most two fractional facilities). We open both the facilities at $+1$ loss in cardinality and at a loss of factor $2$ in the cost by guessing the maximum opening cost of a facility in the optimal. Once we identify the set of facilities to open, we can greedily assign the first $m-t$ clients in the increasing order of distance from the nearest opened facility. Thus, in the rest of the paper, we only focus on identifying the set of facilities to open.
    
For LBFLO, we construct an instance $I'$ of FLO by ignoring the lower bounds and defining new facility opening cost for each $i \in \facilityset$. An approximate solution $AS'$ to $I'$ is obtained using any approximation algorithm for FLO. Facilities serving less than $\alpha\lb{i}$ clients are closed and their clients are either reassigned to the other opened facilities or are made outliers. This leads to violation in outliers that is bounded by $\frac{1}{1-\alpha}$. Facility opening costs in $I'$ are defined to capture the cost of reassignments.

\textbf{Related Work:} 
The problems of facility location and $k$-median with outliers were first defined by Charikar~\etal~\cite{charikar2001algorithms}.
Both the problems were shown to have unbounded integrality gap~\cite{charikar2001algorithms} with their standard LPs.
For FLO, they gave a $(3+\epsilon)$-approximation using primal dual technique by guessing the most expensive facility opened by the optimal solution. 
For a special case of the problem with uniform facility opening costs and doubling metrics, Friggstad~\etal~\cite{FriggstadKR19}
gave a PTAS using multiswap local search. 
For $k$MO, Charikar~\etal~\cite{charikar2001algorithms} gave a $4(1+1/\epsilon)$-approximation with $(1+\epsilon)$-factor violation in outliers. Using local search techniques, Friggstad~\etal~\cite{FriggstadKR19} gave $(3+\epsilon)$ and $(1+\epsilon)$-approximations with ($1+\epsilon$) violation in cardinality for general and doubling metric respectively. Chen~\cite{chen-kMO} gave the first true constant factor approximation for the problem using a combination of local search and primal dual. Their approximation factor is large and it was improved to $(7.081+\epsilon)$ by  Krishnaswamy~\etal~\cite{krishnaswamy-kMO} by strengthening the LP. They use iterative rounding framework and, their factor is the current best result for the problem.

 Lower bounds in FL were introduced by Karger and Minkoff~\cite{Minkoff_LBFL} and Guha~\etal~\cite{Guha_LBFL}. They independently gave constant factor approximations with violation in lower bounds. The first true constant factor($448$) approximation was given by Zoya Svitkina~\cite{Zoya_LBFL} for uniform lower bounds. The factor was improved to $82.6$ by Ahmadian and Swamy \cite{Ahmadian_LBFL}.  Shi Li~\cite{Li_NonUnifLBFL} gave the first constant factor approximation for general lower bounds, with the constant being large ($4000$). 
Han~\etal~\cite{Han_LBkM} studied the general lower bounded $k$-facility location  (LB$k$FL) 
violating the lower bounds.
 Same authors~\cite{Han_LBknapsackM} removed the violation in the lower bound for the $k$-Median problem.
 
 The only work that deals with lower bound and outliers together is by Ahmadian and Swamy~\cite{ahmadian_lboutliers}. They have given constant factor approximation for lower-bounded min-sum-of-radii with outliers and lower-bounded k-supplier with outliers problems using primal-dual technique.
 

\textbf{Organisation of the paper:} A constant factor approximation for $k$FLO  is given in Section~\ref{kFLPO} opening at most $(k+1)$ facilities. In Section~\ref{LBkFLO}, the tri-criteria solution for LBFLO is presented. Finally we conclude with future scope in Section~\ref{conclusion}.
\section{$(k+1)$ solution for $k$FLO}
\label{kFLPO}
The problem $k$FLO can be represented as the following integer program (IP):

\label{{k-FLPO}}
$Minimize ~\mathcal{C}ostkFLO(x,y) = \sum_{j \in \cliset}\sum_{i \in \facilityset}\dist{i}{j}x_{ij} + \sum_{i \in \facilityset}f_iy_i $
\begin{eqnarray}
subject~ to &\sum_{i \in \facilityset}{} x_{ij} \leq 1 & \forall ~\singleclient \in \clientset \label{LPFLP_const1}\\ 
& x_{ij} \leq y_i & \forall~ \singlefacility \in \facilityset , ~\singleclient \in \clientset \label{LPFLP_const2}\\  
& \sum_{i \in \facilityset}y_{i} \leq k & \label{LPFLP_const3}\\ 
& \sum_{j \in \clientset} \sum_{i \in \facilityset}x_{ij} \geq m-t &  \label{LPFLP_const4}\\   
& y_i,x_{ij} \in \left\lbrace 0,1 \right\rbrace  \label{LPFLP_const5}
\end{eqnarray}
where variable $y_i$ denotes whether facility $i$ is open or not and $x_{ij}$ indicates if client $j$ is served by facility $i$ or not. 
Constraints \ref{LPFLP_const1} ensure that the extent to which a client is served is no more than $1$.
Constraints \ref{LPFLP_const2} ensure that a client is assigned only to an open facility. Constraint \ref{LPFLP_const3} ensures that the total number of facilities opened are atmost $k$  and Constraint \ref{LPFLP_const4} ensures that total number of clients served are at least $m-t$. LP-Relaxation of the problem is obtained by allowing the variables $y_i, x_{ij} \in [0, 1]$. Let us call it $LP$.


Let $\rho^{*} = <x^*, y^*>$ denote the optimal solution of $LP$ and $\opt{}$ denote the cost of $\rho^*$. A solution is said to be a {\em complete solution} either $x^*_{ij}= y^*_i$ or $x^*_{ij}=0$, $\forall i \in \facilityset$ and $\forall j \in \clientset$. We first eliminate $x$ variables from our solution $\rho^*$ by making it complete. This is achieved by standard technique of splitting the openings and making collocated copies of facilities. For every client $j \in \cliset$, we will define a bundle, $\starj$ as the set of facilities that are serving $j$ in our complete solution. Formally, $\starj = \{ i \in \facilityset : x^*_{ij}>0\}$. Let $\dlj{} = max_{i \in \starj}\distd{i}{j}$ be the distance of farthest facility in $\starj$ from $j$. See Fig.~\ref{FIG_FLO1}($a$). Note that the complete solution $<x^*, y^*>$ satisfies the following property:
\begin{enumerate}
    \item \label{prop1} $\sum_{i \in \starj} y^*_i \leq 1~\forall j \in \cliset$ as $\sum_{i \in \starj} y^*_i = \sum_{i \in \facilityset}x^*_{ij} \leq 1$.
    \item \label{prop2} $\sum_{i \in \facilityset}y^*_i \leq k$
    \item \label{prop3} $\sum_{j \in \cliset} \sum_{i \in \starj} y^*_i \geq m-t$ as $\sum_{i \in \starj} y^*_i = \sum_{i \in \facilityset}x^*_{ij}$ and $\sum_{j \in \cliset} \sum_{i \in \facilityset} x^*_{ij} \geq m-t$.
\end{enumerate}

\subsection{Auxiliary LP (ALP)}
\begin{figure}[t]
    \begin{center}
	\begin{tabular}{cc}
		\includegraphics[width=4cm]{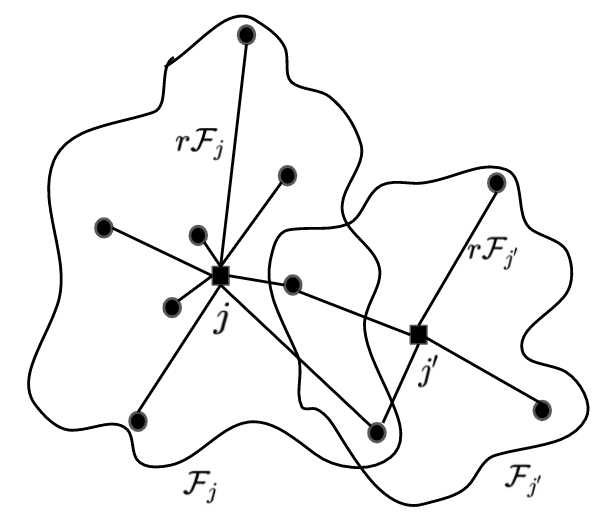}
		& 
		\includegraphics[width=4cm]{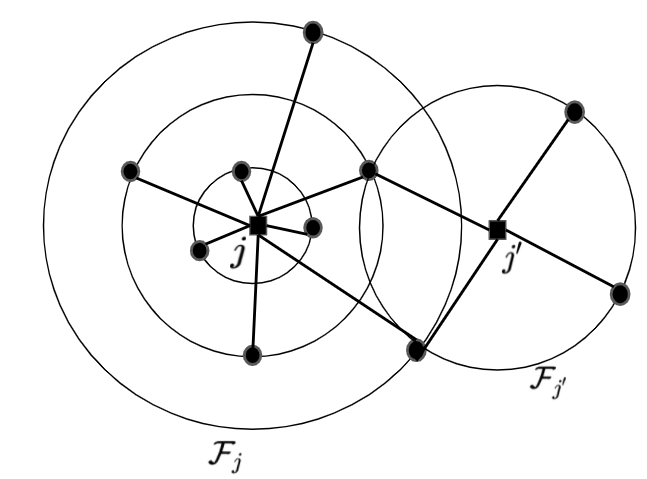}
		\\ ($a$)& \ \ ($b$) \ \
\end{tabular}
\end{center}
	\caption{ ($a$) Set $\starj$ corresponding to a client $j$, ($b$) Discretization of distances}.
	\label{FIG_FLO1}
\end{figure}

We first discretize our distances $c(i,j)$, by rounding them to the nearest power of $2$. Let $\distd{i}{j} = 2^r$, where $r$ is smallest power of $2$ such that $\dist{i}{j} \le 2^r$. See Fig.~\ref{FIG_FLO1}($b$). 
   Next, we identify a set $\cfull $ of clients that are going to be served fully in our solution. Ideally, we would like to open at least one facility in $\starj$ for every $j \in \cfull$. If all the $\starj$'s ($j \in \cfull$) were pair-wise disjoint, an LP constraint like $\sum_{i \in \starj} w^*_i \geq 1$ for all $j \in \cfull$, along with constraints~\ref{LPALP_const3}(for partially served clients, say clients in $\cpart$),~\ref{LPALP_const0}(for cardinality) and~\ref{LPALP_const4}(for outliers), is sufficient to get us a psuedo-integral solution. But this, in general, is not true. 
   Thus we further identify a set $\cstar \subseteq \cfull$ so that we open  one facility in $\starj$ for every $j \in \cstar$ and
   ($i$)  $\starj$'s ($j \in \cstar$) are pair-wise disjoint \textit{(disjointness property)}
   ($ii$) for every $\cli{f} \in \cfull \setminus \cstar$, there is a close-by (within constant factor of $\dlj{}$ distance from $\cli{f}$) client in $\cstar$. 
   On a close observation, we notice that instead of $ \starj$'s,  we are rather interested in smaller sets: let $\rjmax$ be the (rounded) distance of the farthest facility  in $\starj$ serving $j$ in our solution and $\T{j} = \{ i \in \starj : c'(i,j) \leq \rjmax\}$. Then we actually want $\T{j}$'s ($j \in \cstar$) to be pair-wise disjoint. As the distances are discretized, we have that $\rjmax$ is either $\dlj{}$ or is $\le \dlj{}/2$. Since we don't know $\rjmax$, once a client is identified to be in $\cfull$, we search for it by starting with $\T{j} = \starj$, $\rtj{} = \dlj{}$ and, shrinking it over iterations. Shrinking is done whenever, for $\ballj = \{ i \in \starj : c'(i,j) \leq \rjmax/2\}$ , we obtain $\sum_{i \in \ballj} w_i = 1$. Thus we add a constraint $\sum_{i \in \ballj} w_i \le 1$ in our ALP and arrive at the following auxiliary LP (ALP). Variable $w_i$ denotes whether facility $i$ is opened in the solution or not. Constraints (9) and (10) correspond to the requirements of cardinality and outliers. For $\cli{f}\in \cfull$, if the ALP doesn't open a facility within $\ball{\cli{f}}$, it bounds the cost of sending $\cli{f}$ up to a distance of $\rtj{}$.

$\text{Min}~CostALP(w) = \sum_{j \in \cpart}\sum_{i \in \T{j}}\distd{i}{j} w_i + \sum_{j \in \cfull} [ \sum_{i \in \ballj} \distd{i}{j} w_i + (1 - \sum_{i \in \ballj} w_i)\rtj{}] + \sum_{i \in \facilityset}f_i w_i$
\begin{eqnarray}
subject~ to 
&\sum_{i \in \T{j}} w_i = 1 & \forall ~ j \in \cstar \label{LPALP_const1}\\
&\sum_{i \in \ballj} w_i \leq 1 & \forall ~ j \in \cfull \label{LPALP_const2}\\
&\sum_{i \in \T{j}} w_i \leq 1 & \forall ~ j \in \cpart \label{LPALP_const3}\\
&\sum_{i \in \facilityset} w_i \leq k & \label{LPALP_const0}\\
&|\cfull| + \sum_{j \in \cpart} \sum_{i \in \T{j}} w_i \geq m-t & \label{LPALP_const4}\\   
& 0 \leq w_i \leq 1  &\label{LPALP_const5}
\end{eqnarray}

The following lemma gives a feasible solution to ALP such that cost is bounded by LP optimal within a constant factor.
\begin{lemma}
\label{fs_kflo}
A feasible solution $w'$ can be obtained to the  ALP such that $CostALP(w') \leq  2 \opt{LP}$.
\end{lemma}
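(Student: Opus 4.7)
The plan is to show that $w'_i := y^*_i$ for all $i \in \facilityset$ is feasible for the ALP with cost at most $2\,\opt{LP}$. Feasibility of constraints~\eqref{LPALP_const2} and~\eqref{LPALP_const3} is immediate from Property~\ref{prop1}, since $\ballj, \T{j} \subseteq \starj$ and hence $\sum_{i \in \ballj} y^*_i, \sum_{i \in \T{j}} y^*_i \le \sum_{i \in \starj} y^*_i \le 1$. For constraint~\eqref{LPALP_const1}, I would argue that the shrinking procedure defining $\T{j}$ for $j \in \cfull \supseteq \cstar$ preserves the invariant $\sum_{i \in \T{j}} y^*_i = 1$: we only shrink $\T{j}$ down to $\ballj$ when $\ballj$ already has $y^*$-mass $1$. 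Constraint~\eqref{LPALP_const0} is precisely Property~\ref{prop2}, and~\eqref{LPALP_const4} follows from Property~\ref{prop3} combined with the above invariant (each $j \in \cfull$ contributes $1$) and the construction of $\cfull \cup \cpart$ covering the LP's $m-t$ units of demand. The box constraint~\eqref{LPALP_const5} holds since $y^*_i \in [0,1]$.

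For the cost, I would use the discretization inequality $\dist{i}{j} \le \distd{i}{j} \le 2\dist{i}{j}$. For $j \in \cpart$, the ALP contribution is
\[
\sum_{i \in \T{j}} \distd{i}{j}\, y^*_i \;\le\; 2\sum_{i \in \T{j}} \dist{i}{j}\, x^*_{ij},
\]
using $x^*_{ij} = y^*_i$ on $\starj$ in the complete solution. For $j \in \cfull$, the invariant $\sum_{i \in \T{j}} y^*_i = 1$ lets me rewrite the bracketed ALP term as
\[
\sum_{i \in \ballj} \distd{i}{j}\, y^*_i \;+\; \rtj{} \sum_{i \in \T{j}\setminus\ballj} y^*_i.
\]
For $i \in \T{j}\setminus\ballj$, the discretized distance lies in $(\rtj{}/2, \rtj{}]$; since discretized distances are powers of $2$, this forces $\distd{i}{j} = \rtj{}$ and hence $\dist{i}{j} \ge \rtj{}/2$. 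Each of the two summands is thus bounded term-by-term by $2\dist{i}{j}\,y^*_i$, giving a full-client contribution at most $2\sum_{i \in \T{j}} \dist{i}{j}\,x^*_{ij}$. Adding the unchanged facility term $\sum_i f_i y^*_i$ yields
\[
CostALP(w') \;\le\; 2\sum_{i,j} \dist{i}{j}\,x^*_{ij} + \sum_i f_i y^*_i \;\le\; 2\,\opt{LP}.
\]

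The step I expect to be the most delicate is the accounting for $j \in \cfull$, specifically charging the ``virtual outlier'' term $(1-\sum_{i \in \ballj} w_i)\rtj{}$ back to the original LP. This requires both the invariant $\sum_{i \in \T{j}} y^*_i = 1$ produced by the shrinking procedure, and the geometric observation that facilities in $\T{j}\setminus\ballj$ sit at distance at least $\rtj{}/2$ from $j$ in the original metric. Together these let the $\rtj{}$ charge be absorbed within a factor $2$, which is precisely where the lemma's factor of $2$ originates.
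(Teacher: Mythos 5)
There is a genuine gap, and it stems from which ALP the lemma is actually about. In the paper, Lemma~\ref{fs_kflo} refers to the \emph{initial} auxiliary LP ($ALP_1$), the one the iterative rounding starts from, where $\cfull=\cstar=\emptyset$, $\cpart=\cliset$, $\T{j}=\starj$ and $\rtj{}=\dlj{}$. There, constraints~(\ref{LPALP_const1}) and~(\ref{LPALP_const2}) hold vacuously, constraints~(\ref{LPALP_const3}),~(\ref{LPALP_const0}),~(\ref{LPALP_const4}) are exactly properties~\ref{prop1}--\ref{prop3} of the complete solution, and the cost bound is the one-line computation $\sum_j\sum_{i\in\starj} c'(i,j)y^*_i \le 2\sum_{i,j} c(i,j)x^*_{ij}$ using $c'\le 2c$ and $x^*_{ij}=y^*_i$. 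Your treatment of $\cpart$, the cardinality and outlier constraints, and the factor-$2$ discretization argument coincides with this and is fine. But you instead set out to prove feasibility of $w'=y^*$ for an ALP at a general stage, with $\cfull$ and $\cstar$ nonempty and the sets $\T{j},\ballj$ already shrunk, and there your argument breaks.

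The breaking point is the claimed invariant that the shrinking procedure preserves $\sum_{i\in\T{j}} y^*_i = 1$ for $j\in\cfull$ (``we only shrink $\T{j}$ down to $\ballj$ when $\ballj$ already has $y^*$-mass $1$''). The algorithm shrinks when $\sum_{i\in\ballj} w^*_i = 1$ for the \emph{current extreme-point iterate} $w^*$ of the current ALP, and moves $j$ into $\cfull$ when $\sum_{i\in\T{j}} w^*_i=1$; neither condition says anything about the mass of $y^*$ on these sets, and in general $y^*$ need not remain feasible (nor cheap) for the shrunk ALPs. This is precisely why the paper does not argue via $y^*$ at later stages: the role your invariant is meant to play is instead played by Lemma~\ref{FS_ALP}, which shows that the algorithm's own iterate $w^t$ stays feasible for $ALP_{t+1}$ with non-increasing cost, and the chain $CostALP_n(w^n)\le CostALP_1(w^1)\le CostALP_1(w')\le 2\opt{LP}$ then only needs Lemma~\ref{fs_kflo} for the initial ALP. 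So the ``delicate'' $\cfull$ accounting you flag rests on a false premise; once you restrict the claim to $ALP_1$, that part of your proof becomes vacuous and the remainder is the paper's proof.
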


\begin{proof}
Let $w'_i = y^*_{i}$.
\begin{enumerate}
    \item {\em Feasibility:} Constraints \ref{LPALP_const1} and \ref{LPALP_const2} hold vacuously as $\cfull$ and hence $\cstar$ are empty. 
    Constraints  \ref{LPALP_const3}, \ref{LPALP_const0} and \ref{LPALP_const4} hold by properties~\ref{prop1},~\ref{prop2} and~\ref{prop3} respectively.
    
    \item {\em Cost Bound:} 
    As $\T{j}=\starj$,
    $CostALP(w'_i) = \sum_{j \in \cliset} \sum_{i \in \starj} \distd{i}{j} y^*_{i} + \sum_{i \in \facilityset} f_i y^*_i \leq 2 \sum_{j \in \cliset} \sum_{i \in \starj} \dist{i}{j} x^*_{ij} + \sum_{i \in \facilityset} f_i y^*_i = 2\sum_{j \in \cliset} \sum_{i \in \facilityset} \dist{i}{j} x^*_{ij} + \sum_{i \in \facilityset} f_i y^*_i = 2\opt{}$. The inequality follows as  $c'(i,j) \le 2c(i,j)$ and $x^*_{ij} = y^*_i$.
\end{enumerate}
\end{proof} \qed
    
\subsection{Iterative Rounding}
We next present an iterative rounding algorithm(IRA) for solving the ALP. In every iteration of IRA, we compute an extreme point solution $w^*$ to ALP and check whether any of the constraints \ref{LPALP_const2} or \ref{LPALP_const3} has become tight. If a constraint corresponding to $j \in \cpart$ gets tight, we move the client to $\cfull$ and remove it from $\cpart$. We also update $\cstar$ so that disjointness property is satisfied. If a constraint corresponding to $\cli{f} \in \cfull$ gets tight, we shrink $\T{\cli{f}}$ to $\ball{\cli{f}}$;
update $\ball{\cli{f}}$ and $\cstar$ accordingly.  The algorithm is formally stated in Algorithm \ref{IRA}. For $j \in \cfull$, let $resp(j)$  be the client $j' \in \cstar$ who takes the responsibility of getting $j$ served. Whenever $j$ is added to $\cstar, resp(j)$ is set to $j$ and whenever it is removed because of another client $j' \in\cstar$, $resp(j)$ is set to $j'$. If $j$ was never added to $\cstar$, then there must be a $j'$ because of which it was not added to $\cstar$ in lines $4$ and $5$. Such a $j'$ takes the responsibility of $j$ in that case. Note that a client $j$ may be added and removed several times from $C^*$ over the iterations of the algorithm as $\T{j}$ and $\ball{j}$ shrink (see Fig.~\ref{fig_ALP} for illustration).

\begin{algorithm} 
	\footnotesize
	\begin{algorithmic}[1]
		\STATE $\cfull \leftarrow \phi$, $\cpart \leftarrow \cliset$, $\cstar \leftarrow \phi$, $\T{j}=\starj$, $\rtj{} = \dlj{}$\\
		\WHILE {true}
		\STATE Find an extreme point solution $w^*$ to ALP
		\IF{there exists some $j \in \cpart$ such that $\sum_{i \in \T{j}} w^*_i$ = 1} 
		{\STATE $\cpart \leftarrow \cpart \setminus \{j\}, \cfull \leftarrow \cfull \cup \{ j\}, \ballj \leftarrow \{ i \in \T{j} : \distd{i}{j} \leq \floor{\rtj{}/2}\}$
		\STATE $process-\cstar(j)$.}
		\ENDIF
		\IF{there exists $j \in \cfull$ such that $\sum_{i \in \ballj} w^*_i=1$}
		{\STATE $\T{j} \leftarrow \ballj, \rtj{} = \floor{\rtj{} /2}, \ballj \leftarrow \{ i \in \T{j} : \distd{i}{j} \leq \floor{\rtj{}/2} \}$ \STATE $process-\cstar(j)$} 
		\ENDIF
		\ENDWHILE 
	\STATE Return $w^*$
	\vspace{0.5cm}
	\STATE $process-C^*(j)$
        \IF{there exists $j' \in \cstar$ with $\rtj{'} < \rtj{}$ and $\T{j} \cap \T{j'} \neq \phi$} {\STATE $resp(j) = j'$, if there are more than one such $j'$s, choose any arbitrarily.}
	\ELSE {\STATE \textbf{if} $j \in \cstar$ \textbf{then} update $\rtj{}$ to its new value} \STATE \ \ \ \ \ \ \ \ \ \ \ \ \ \textbf{else} Add $j$ to $\cstar$ with $\rtj{}$ and $resp(j) = j$. 
{\STATE Remove  all $j'$ from $\cstar$ for which  $\rtj{} < \rtj{'}$ and $\T{j} \cap \T{j'} \neq \phi$, $resp(j') = j$. }
    \ENDIF
	\end{algorithmic}
	\caption{Iterative Rounding Algorithm}
	\label{IRA}
\end{algorithm}

\begin{figure}
	\begin{tabular}{ccc}
	\includegraphics[width=3.5cm]{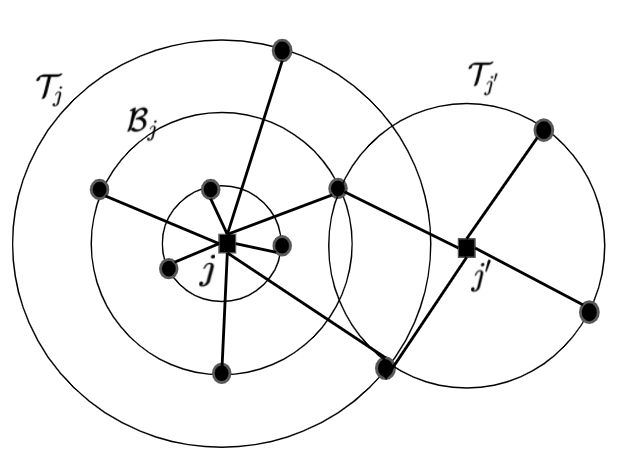}
		&
		\includegraphics[width=3.5cm]{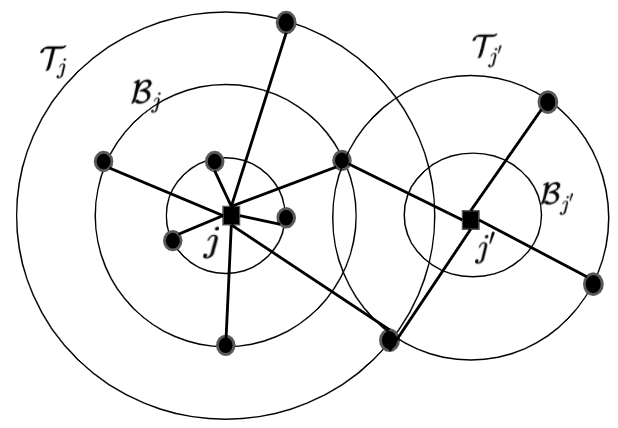}
		&
		\includegraphics[width=3.5cm]{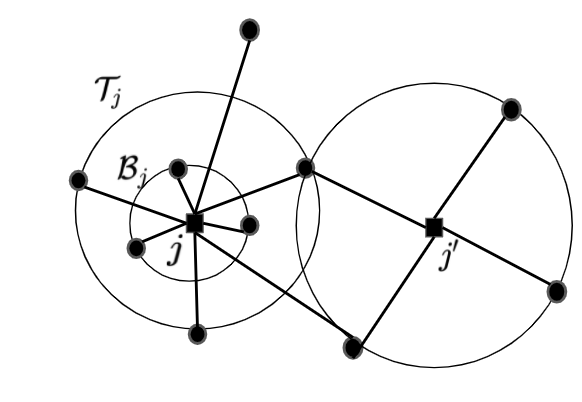}
		\\
		 ($a$)& \ \ ($b$)& \ \ ($c$) \ \
	\end{tabular}
	\caption{($a$) Initially both $j$ and $j'$ are in $\cpart$. Suppose $\sum_{i \in \T{j}} w^*_i = 1$, then $j$ is added to $\cfull$ and $\ballj$ is defined for $j$. $j$ is added to $C^*$ as well. ($b$) Subsequently, suppose $\sum_{i \in \T{j'}} w^*_i = 1$, then $j'$ is added to $\cfull$ and $\ball{j'}$ is defined for $j'$. $j'$ is added to $C^*$ whereas $j$ is removed from $C^*$ because $\rtj{'} < \rtj{}$ and $\T{j} \cap \T{j'} \neq \phi$ ($c$) Next, suppose $\sum_{i \in \ball{j}} w^*_i = 1$ in a future iteration, then $\T{j}$ and $\ballj$ shrink. $j$ is added to $C^*$ again and $j'$ is removed from $C^*$ because after shrinking $\rtj{}<\rtj{'}$.} 
	\label{fig_ALP}
\end{figure}

Lemmas~\ref{FS_ALP},~\ref{pseudo_integral}, and~\ref{neartocfull} help us analyse our algorithm. Lemma~\ref{FS_ALP} shows that the solution obtained in an iteration is feasible for the ALP of the next iteration.
We also prove that the cost of the solutions computed is non-increasing over iterations. 

\begin{lemma}
\label{FS_ALP}
Let $ALP_{t}$ and $ALP_{t+1}$ be the auxiliary LPs before and after iteration $t$ of IRA. Let $w^t$ be the extreme point solution obtained in $t^{th}$ iteration. Then
$(i)$ $w^t$ is a feasible solution to $ALP_{t+1}$, $(ii)$ $CostALP_{t+1}(w^{t}) \leq CostALP_{t}(w^{t})$ and hence $CostALP_{t+1}(w^{t+1}) \leq CostALP_{t}(w^{t})$.
\end{lemma}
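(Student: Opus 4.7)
The plan is to case-split on which tightness condition triggers iteration $t$: Case~A, some $j \in \cpart$ has $\sum_{i \in \T{j}} w^*_i = 1$ and is moved to $\cfull$ with $\ballj$ newly defined; Case~B, some $j \in \cfull$ has $\sum_{i \in \ballj} w^*_i = 1$, so $\T{j}$ is shrunk to the old $\ballj$, $\rtj{}$ is halved, and $\ballj$ is recomputed. In both cases $\cstar$ and $resp$ may be updated via $process\text{-}\cstar(j)$. No quantity associated with a client $j' \neq j$ changes in this iteration, so both claims reduce to tracking what happens at $j$ and at $\cstar$.

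For part~$(i)$, I would walk through the constraints of $ALP_{t+1}$. Constraints~\ref{LPALP_const0} and~\ref{LPALP_const5} are syntactically identical to those of $ALP_t$. For constraint~\ref{LPALP_const3}, Case~A simply drops the constraint on $j$ (since $j$ leaves $\cpart$), Case~B leaves it untouched, and constraints for the other clients in $\cpart$ are preserved. For constraint~\ref{LPALP_const2}, Case~A adds a new inequality $\sum_{i \in \ballj} w^t_i \le 1$ which holds because $\ballj \subseteq \T{j}$ and $\sum_{i \in \T{j}} w^t_i = 1$; Case~B replaces $\ballj$ by a subset of itself, so the inequality is only weakened. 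The outlier constraint~\ref{LPALP_const4} is unchanged in Case~B, while in Case~A the $|\cfull|$ term increases by $1$ and exactly compensates for the loss of $\sum_{i \in \T{j}} w^t_i = 1$ from the $\cpart$ sum. For the $\cstar$-equalities~\ref{LPALP_const1}, any $j'$ remaining in $\cstar$ has its $\T{j'}$ (and hence its constraint) unchanged, any $j'$ removed simply loses its constraint, and whenever $j$ itself is (re)added to $\cstar$, the required equality is precisely the tightness condition that triggered the iteration.

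For part~$(ii)$, the facility-opening term and every per-client contribution except that of $j$ are literally unchanged. Using $\sum_{i \in \T{j}} w^t_i = 1$ in Case~A and $\sum_{i \in \ballj} w^t_i = 1$ in Case~B, a direct subtraction yields, in both cases,
\[
CostALP_t(w^t) \;-\; CostALP_{t+1}(w^t) \;=\; \sum_{i \in \Delta} (\distd{i}{j} - R)\, w^t_i ,
\]
where in Case~A, $\Delta = \T{j} \setminus \ballj$ and $R = \rtj{}$, while in Case~B, $\Delta$ is the set dropped from $\ballj$ during its recomputation and $R$ is the halved $\rtj{}$. The crucial observation is that after the initial rounding of every distance to a power of $2$, each $\rtj{}$ is itself a power of $2$, and an index $i \in \Delta$ satisfies $\floor{R/2} < \distd{i}{j} \le R$. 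The only power of $2$ in the interval $(\floor{R/2}, R]$ is $R$ itself, so every summand vanishes and in fact $CostALP_{t+1}(w^t) = CostALP_t(w^t)$, which is slightly stronger than the claim. The ``hence'' clause is then immediate: $w^{t+1}$ is the optimal extreme point of $ALP_{t+1}$ and $w^t$ is feasible for $ALP_{t+1}$ by part~$(i)$, so $CostALP_{t+1}(w^{t+1}) \le CostALP_{t+1}(w^t) = CostALP_t(w^t)$.

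The main obstacle I anticipate is the $\cstar$-bookkeeping in part~$(i)$: the $process\text{-}\cstar$ subroutine can add $j$, remove old members, or merely update the $\rtj{}$ of an existing member, and one must match every equality constraint present in $ALP_{t+1}$ to the iteration that last triggered it so that feasibility of $w^t$ is inherited from the appropriate trigger. Part~$(ii)$, once one notices the power-of-$2$ identity above, is essentially a one-line algebraic check.
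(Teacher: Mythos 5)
Your proof is correct and follows essentially the same route as the paper's: a case split on which tightness condition fired, a constraint-by-constraint feasibility check (with the new or shrunk $\ballj$/$\T{j}$ constraints discharged by $\ballj \subseteq \T{j}$ and the triggering equality), and the observation that the affected client's cost contribution is exactly unchanged, so $CostALP_{t+1}(w^t) = CostALP_t(w^t)$. The power-of-two fact you invoke to kill the annulus terms is precisely what the paper uses implicitly when it splits $\T{j}$ into facilities at distance $\le \rtj{}/2$ and at distance exactly $\rtj{}$, so no substantive difference remains.
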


\begin{proof}

Note that the feasibility and the cost can change only when one of constraints~(\ref{LPALP_const2}) or constraints~(\ref{LPALP_const3}) 
$w^t$ becomes tight, that is, either condition at step 4 or condition at step 6 of the algorithm is true.


\begin{itemize}
    \item[($i$)] 
    When one of constraints~(\ref{LPALP_const3}) corresponding to a client $j$ becomes tight i.e. $\sum_{i \in \T{j}} w^t =1$, we move client $j$ from $\cpart$ to $\cfull$ and
    define the set $\ballj$. Thus,
  $\sum_{i \in \ballj } w^t \leq \sum_{i \in \T{j} } w^t = 1$. Thus the new constraints added in constraints~\ref{LPALP_const2} and~\ref{LPALP_const1} (if $j$ is added to $\cstar$) are satisfied. Constraint~(\ref{LPALP_const4}) holds as $|\cfull|$ increases by $1$ and $\sum_{j \in \cpart} \sum_{i \in \T{j}} w^t$ decreases by $1$.  There is no change in constraint~\ref{LPALP_const0}.
    
    Let one of the constraints~(\ref{LPALP_const2}) corresponding to a full client $j$ becomes tight i.e. $\sum_{i \in \ballj} w^t =1$. Two things happen here: ($i$) we shrink $\T{j}$ to $\ballj$, hence $\sum_{i \in \T{j}} w^t=1$. Thus constraint~\ref{LPALP_const1} is satisfied if $j$  is added to $\cstar$. ($ii$) shrink $\ballj$ to half its radius, thus $\sum_{i \in \ballj } w^t \leq \sum_{i \in \T{j} } w^t = 1$. Thus constraint~\ref{LPALP_const2} corresponding to $j$ continue to be satisfied with the shrunk $\ballj$. 
    There is no change in constraints~\ref{LPALP_const0} and~\ref{LPALP_const4}.

    \item[($ii$)] For a client $j$, let $\dlj{}^{t}$, $\ballj^{t}$ and $\T{j}^{t}$ be the set $\dlj{}$, $\ballj$ and $\T{j}$ corresponding to client $j$ in $ALP_{t}$ and $\dlj{}^{t+1}$, $\ballj^{t+1}$ and $\T{j}^{t+1}$ be the respective values in $ALP_{t+1}$.
    
    \begin{enumerate}
    \item[a.] When $\T{j}$ and $\ballj$ shrink because constraint~\ref{LPALP_const2} becomes tight for a client $j$. Cost paid by $j$ in $w^t$ in the $t^{th}$ iteration
    $ = \sum_{i \in \ballj^{t}} c'(i,j) w^t $ because $\sum_{i \in \ballj^{t}} w^t=1$ in the $t^{th}$ iteration. Since  $\ballj^{t}= \T{j}^{t+1}$, $ \sum_{i \in \ballj^{t}} c'(i,j) w^t = \sum_{i \in \T{j}^{t+1} : c'(i,j) \leq \dlj{}^{t+1}/2} c'(i,j) w^t +  \sum_{i \in \T{j}^{t+1} : c'(i,j) = \dlj{}^{t+1}} c'(i,j) w^t = \\ \sum_{i \in \ballj^{t+1}} c'(i,j) w^t + (1- \sum_{i \in \ballj^{t+1}} w^t)\dlj{}^{t+1}=$ Cost paid by $j$ in $w^t$ in the $(t+1)^{th}$ iteration. Thus change in cost is $0$.
        
        \item[b.] When a client $j$ is moved from $\cpart$ to $\cfull$ because constraint~\ref{LPALP_const3} becomes tight.  Cost paid by $j$ in $w^t$ in the $t^{th}$ iteration $= \sum_{i \in \T{j}^{t}} c'(i,j) w^t  = \\ \sum_{i \in \T{j}^{t} : c'(i,j) \leq \rtj{}^{t}/2} c'(i,j) w^t + \sum_{i \in \T{j}^{t} : c'(i,j) = \rtj{}^{t}} c'(i,j) w^t = \\ \sum_{i \in \ballj^{t+1}} c'(i,j) w^t + (1-\sum_{i \in \ballj^{t+1}} w^t) \rtj{}^{t+1}=$ Cost paid by $j$ in $w^t$ in the $(t+1)^{th}$ iteration. Thus change in cost is $0$.
        
    \end{enumerate}
\end{itemize}
\end{proof} \qed

Thus we have, $CostALP_{t+1}(w^{t+1}) \le CostALP_{t+1}(w^{t}) = CostALP_{t}(w^{t})$ where the first inequality follows because $w^{t+1}$ is an extreme point solution and $w^t$ is a feasible solution to $ALP_{t+1}$. 
Hence, if $n$ is the number of iterations  of the IRA then $CostALP_n(w^n) \leq  CostALP_{1}(w^1) \leq CostALP(w') \le 2\opt{LP}$ where the second last inequality follows as $w^1$ is an extreme point solution and  $w'$ is a feasible solution for $ALP = ALP_1$,  last inequality follows from Lemma~\ref{fs_kflo}.
Let $w^*$ be the solution returned by the IRA, then $w^* = w^n$. Lemma~\ref{pseudo_integral} establishes that at the end of our IRA, solution $w^*$ is pseudo-integral.

\begin{lemma}
\label{pseudo_integral}
$w^*$ returned by Algorithm \ref{IRA} has at most two fractionally opened facilities.  
\end{lemma}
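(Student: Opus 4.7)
The plan is to use the standard extreme-point characterization of a basic feasible solution combined with the disjointness invariant maintained by \emph{process-$C^*$}. The first step is to identify which constraints can be tight at $w^*$. Since the while-loop can only exit when neither trigger at line~4 nor at line~6 is active, at termination every $j \in \cpart$ satisfies $\sum_{i \in \T{j}} w^*_i < 1$ and every $j \in \cfull$ satisfies $\sum_{i \in \ball{j}} w^*_i < 1$. Hence none of the inequality families (\ref{LPALP_const2}) or (\ref{LPALP_const3}) is tight, and the only non-bound constraints that can be tight at $w^*$ are the equalities (\ref{LPALP_const1}) for $j \in \cstar$, together with possibly the cardinality constraint (\ref{LPALP_const0}) and the outlier constraint (\ref{LPALP_const4}).

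Next I would apply the BFS characterization: letting $n_f = |\{i : 0 < w^*_i < 1\}|$, the rank of the tight non-bound constraints, restricted to the fractional coordinates, equals $n_f$. Here the crucial structural fact is that \emph{process-$C^*$} maintains $\T{j} \cap \T{j'} = \emptyset$ for distinct $j, j' \in \cstar$, so the constraints (\ref{LPALP_const1}) have pairwise disjoint supports. Let $\cstar' \subseteq \cstar$ be the subcollection whose $\T{j}$ contains at least one fractional coordinate; these constraints contribute exactly $|\cstar'|$ independent rows to the rank. Moreover, because $\sum_{i \in \T{j}} w^*_i = 1$ while every integer coordinate is in $\{0,1\}$, the fractional coordinates inside $\T{j}$ must sum to a positive integer, which is impossible with a single coordinate in $(0,1)$; so each $j \in \cstar'$ contributes at least two distinct fractional coordinates, and by disjointness these are pairwise distinct across $\cstar'$.

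Combining the two bounds, $2|\cstar'| \le n_f \le |\cstar'| + 2$, which forces $|\cstar'| \le 2$ and reduces the lemma to a short case analysis on $|\cstar'|$. When $|\cstar'|=0$, only (\ref{LPALP_const0}) and (\ref{LPALP_const4}) can contribute tight rows, so $n_f \le 2$ immediately. The main obstacle is the case $|\cstar'| \ge 1$: one must show that the rank contributions of (\ref{LPALP_const0}) and (\ref{LPALP_const4}) collapse into the (\ref{LPALP_const1}) equalities restricted to the fractional coordinates. The key observation to exploit is that once (\ref{LPALP_const1}) pins $\sum_{i \in \T{j}} w^*_i = 1$, the total contribution of the fractional variables in $\T{j}$ to both the cardinality sum and the outlier sum is a fixed constant; hence at least one of (\ref{LPALP_const0}) or (\ref{LPALP_const4}) becomes linearly dependent on the (\ref{LPALP_const1})'s over the fractional support, tightening the rank accounting so that the inequality $n_f \le |\cstar'| + 2$ improves to $n_f \le 2$ in every case. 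Establishing this linear dependence carefully---distinguishing fractional facilities that also belong to $\T{j'}$ for some $j' \in \cpart$ (and hence enter the outlier sum) from those that do not---is the step that I expect to require the most care.
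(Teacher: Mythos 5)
Your reduction follows the same skeleton as the paper's proof: at termination only constraints (\ref{LPALP_const1}), (\ref{LPALP_const0}) and (\ref{LPALP_const4}) can be tight, the rank of the tight rows restricted to the fractional coordinates equals $n_f$, and disjointness of the sets $\T{j}$, $j \in \cstar$, gives $n_f \ge 2X$ while the available rows give $n_f \le X+2$ (your $X = |\cstar'|$). But this only yields $n_f \le 4$, and the step you explicitly defer is exactly the step that carries the lemma. The paper closes it by asserting that the tight cardinality/outlier rows force at least two \emph{additional} fractional variables outside the supports of the tight constraints (\ref{LPALP_const1}), i.e.\ $n_f \ge 2X+2$, which together with $n_f \le X+2$ gives $n_f \le 2$ at once; you only have $n_f \ge 2X$, so your proposal is an incomplete proof, not merely a rougher write-up.

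Moreover, the route you sketch for the missing step does not work in the critical subcase. Pinning the \emph{value} $\sum_{i \in \T{j}} w^*_i = 1$ creates no relation among the \emph{coefficient vectors} of the rows, so it cannot by itself make (\ref{LPALP_const0}) or (\ref{LPALP_const4}) linearly dependent on the (\ref{LPALP_const1}) rows. Concretely, take $X=1$ with two fractional facilities inside $\T{j}$ and one fractional facility outside it: restricted to the fractional support the rows are $(1,1,0)$ for (\ref{LPALP_const1}), $(1,1,1)$ for (\ref{LPALP_const0}), and a nonnegative integer row for (\ref{LPALP_const4}) (a facility may belong to $\T{j'}$ for several $j' \in \cpart$, and nothing ties these multiplicities to $\T{j}$, $j \in \cstar$); these three rows can be linearly independent, so no ``collapse'' occurs and your counting stalls at $n_f=3$. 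What excludes this configuration is an integrality argument, not dependence: if (\ref{LPALP_const0}) is tight, the fractional mass lying outside the tight sets $\T{j}$ must be a positive integer (each tight $\T{j}$ contributes exactly $1$ and integral variables contribute integers), hence requires at least two fractional variables outside those sets --- which is precisely the paper's extra ``$+2$''. Your dependence idea is valid in the subcase $X=2$, $n_f=4$, where the restriction of (\ref{LPALP_const0}) equals the sum of the two (\ref{LPALP_const1}) rows, but as the general mechanism it fails, so the proof as proposed has a genuine gap.
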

\begin{proof}
At the termination of the algorithm constraints~\ref{LPALP_const2} and~\ref{LPALP_const3} will not be tight. Let $n_f$ be the number of fractional variables at the end of the algorithm. Then there are exactly $n_f$ number of independent tight constraints from~(\ref{LPALP_const1}),~(\ref{LPALP_const0}) and~(\ref{LPALP_const4}). Let $X$ be the number of tight constraints of type~\ref{LPALP_const1}. There must be at least $2$ fractional variables corresponding to each of these constraints. Also, there must be at least $2$ fractional variables corresponding to constraint~\ref{LPALP_const1}, different from those obtained constraints~\ref{LPALP_const4}. Thus, $n_f \ge 2X + 2$ i.e. $X \le n_f/2 -1$. Also, the number of  tight constraints is at most $X + 2$ and hence is at most $n_f/2 + 1$ giving us $n_f \le n_f/2 + 1$ or $n_f \le 2$.
\end{proof} \qed

We open both the fractionally opened facilities at a loss of  $+ f_{max}$ in the facility opening cost where $f_{max}$ is the guess of the most expensive facility opened by the optimal.
In Lemma~\ref{neartocfull} we show that for a client $j$ in $\cfull \setminus \cstar$ there is some client in $ \cstar$, that is close to $j$, i.e. within $5\rtj{}$ distance of $j$.

\begin{lemma}
\label{neartocfull}
At the conclusion of the algorithm, for every $j \in \cfull$, there exists at least 1 unit of open facilities within distance $5 \rtj{}$ from j. Formally, $\sum_{i:c'(i,j) \leq 5\rtj{}} w^*_i \geq 1$.
\end{lemma}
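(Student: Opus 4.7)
The plan is to split the argument into two cases based on whether $j$ belongs to $\cstar$ at termination. If $j \in \cstar$ at the end, constraint~(\ref{LPALP_const1}) directly gives $\sum_{i \in \T{j}} w^*_i = 1$, and every $i \in \T{j}$ satisfies $\distd{i}{j} \le \rtj{}$ by construction, so the lemma is immediate. The substantive case is $j \in \cfull \setminus \cstar$ at the end, which I plan to handle by tracing the final responsibility chain: set $j^{(0)} = j$ and inductively let $j^{(k+1)}$ be the final value of $resp(j^{(k)})$, stopping at some $j^{(n)}$ that lies in $\cstar$ at the end.

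Writing $R_k$ for the final radius of $j^{(k)}$ and $\tau_k$ for the last invocation of $process\text{-}\cstar(j^{(k)})$, I would first establish two key facts about $\tau_k$. Since $j^{(k)} \notin \cstar$ at the end for every $k<n$, the ELSE branch cannot have fired at $\tau_k$ (it would add or retain $j^{(k)}$ in $\cstar$), so the IF branch fired and $resp(j^{(k)}) = j^{(k+1)}$ was assigned at $\tau_k$; and because no further $process\text{-}\cstar(j^{(k)})$ calls occur, the radius of $j^{(k)}$ is frozen after $\tau_k$ and therefore already equals $R_k$ at $\tau_k$. From the IF condition, the radius of $j^{(k+1)}$ at $\tau_k$ is strictly less than $R_k$, and $\T{j^{(k)}}$ and $\T{j^{(k+1)}}$ at $\tau_k$ share some facility. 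Since all radii are powers of $2$, the strict inequality sharpens to $R_{k+1} \le R_k/2$, giving $R_k \le R_0/2^k$ and ensuring the chain terminates in finitely many steps.

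Picking any $i_k$ in the intersection at $\tau_k$, $\distd{i_k}{j^{(k)}} \le R_k$ and $\distd{i_k}{j^{(k+1)}} \le R_k/2$, so the triangle inequality on $c$ (together with $c \le c'$) yields $c(j^{(k)}, j^{(k+1)}) \le 3R_k/2$. Summing along the chain telescopes to $c(j, j^{(n)}) < 3R_0$. Since $j^{(n)} \in \cstar$ at the end, constraint~(\ref{LPALP_const1}) gives $\sum_{i \in \T{j^{(n)}}} w^*_i = 1$, with each such $i$ at distance $\distd{i}{j^{(n)}} \le R_n$. A final triangle-inequality step then yields $c(i,j) \le R_n + 3R_0 < 4R_0$, whence $\distd{i}{j} \le 4R_0 \le 5\rtj{}$, proving the claim. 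The main subtlety I anticipate is the two-step analysis of $\tau_k$ --- that the IF branch fired and that the radius of $j^{(k)}$ is frozen thereafter --- which drives the geometric decay $R_{k+1} \le R_k/2$; without it the telescoping sum would not converge to a constant multiple of $R_0$.
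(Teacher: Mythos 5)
Your overall route is the same as the paper's: follow the chain of $resp$ pointers from $j$ to a client that lies in $\cstar$ at termination, use the power-of-two discretization to turn the strict radius comparison into geometric decay along the chain, telescope the distances, and add the final hop inside $\T{}$ of the terminal client via constraint~(\ref{LPALP_const1}). Your constant ($<4\rtj{}$, hence $\le 5\rtj{}$) is consistent with the paper's $5\rtj{}$, and your detour through the original metric $c$ (using that $4R_0$ is again a power of two before converting back to $c'$) is, if anything, more careful than the paper's direct use of the triangle inequality on the rounded distances.

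However, the step that drives your analysis contains a genuine flaw. You argue that for $k<n$, since $j^{(k)}\notin\cstar$ at termination, the ELSE branch cannot have fired at the last call $process-\cstar(j^{(k)})$, and hence the final assignment $resp(j^{(k)})=j^{(k+1)}$ was made at $\tau_k$ via the IF branch. This inference is false: a client is removed from $\cstar$, and has its $resp$ pointer rewritten, inside $process-\cstar$ calls of \emph{other} clients, through the removal step ``Remove all $j'$ from $\cstar$ for which $\rtj{}<\rtj{'}$ and $\T{j}\cap\T{j'}\neq\phi$, $resp(j')=j$.'' So the ELSE branch may perfectly well have fired at $\tau_k$, adding or retaining $j^{(k)}$ in $\cstar$, and $j^{(k)}$ may later be expelled by $j^{(k+1)}$ without any further call $process-\cstar(j^{(k)})$; this ``added to $\cstar$ but removed later'' scenario is exactly the main case in the paper's proof and is the situation illustrated in Fig.~\ref{fig_ALP}($b$). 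As written, your case analysis misses the principal mechanism by which the $resp$ pointers of chain nodes arise. The gap is repairable within your structure: in the missed case the removal occurs after $\tau_k$, so by your own freezing argument the radius of $j^{(k)}$ already equals $R_k$ at that moment, and the removal condition supplies both the strict radius comparison (hence $R_{k+1}\le R_k/2$) and an intersecting facility, which is all your telescoping uses. But the claim ``the IF branch fired at $\tau_k$'' is not true in general, and the proof needs this second case stated and handled explicitly.
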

\begin{proof}
 Let $j \in \cfull$.  
 If $resp(j) = j$, then this means that $j$ was added to $\cstar$ and was present in $\cstar$ at the end of the algorithm. Then, one unit is open in $\T{j}$ i.e. within a distance of $\rtj{}$ of $j$.
 
 If $resp(j) = j' (\ne j)$ then $j$ was either never added to $\cstar$ or was removed later. In either case responsibility of opening a facility in a close vicinity of $j$ was taken by $j'$. First we consider the case when $j$ was added to $\cstar$ but removed later. Let $j_0, j_1, \ldots j_r$ be the sequence of clients such that $resp(j_i) = j_{i-1}, i = 1 \ldots r, \ resp(j_0) = j_0$ and $j_r = j$. Since $resp(j_0) = j_0$, one unit is open in $\T{j_0}$ i.e. within a distance of $\rtj{0}$ of $j_0$. Clearly, $\rtj{i-1} \le \rtj{i}/2$. Thus $\rtj{i} \le (1/2)^{r -i} \rtj{r}$ for all $i = 0 \ldots r - 1$.  Thus, $\distd{j_{r}}{j_{0}} \le \sum_{i = 1~to~r} \distd{j_{i}}{j_{i-1}} \le \sum_{i = 1~to~r}  (\rtj{i} + \rtj{i - 1}) \le  \rtj{0} + 2\sum_{i = 1~to~r-1}  \rtj{i} + \rtj{r} \le \rtj{0} + 2\sum_{i = 1~to~r-1} (1/2)^{r -i} \rtj{r} + \rtj{r}$. Thus one unit of facility is open within a distance of $2\sum_{i = 0~to~r-1} (1/2)^{r -i} \rtj{r} \\ + \rtj{r} = \sum_{t = 1~to~r} (1/2)^{r - t} \rtj{r} + \rtj{r} \le 3 \rtj{r}$ from $j$. 
 
 Next, let $j$ was never added to $\cstar$. Then since $resp(j) = j' (\ne j)$, $j'$ was added to $\cstar$ at some point of time. Thus, from above one unit of facility is opened within distance $3\rtj{'}$ of $j'$. Also, $\distd{j}{j'} \le \rtj{} + \rtj{'} \le 2\rtj{}$. Thus, one unit of facility is opened within distance $5\rtj{}$ of $j$.
 
\end{proof} \qed

We run the algorithm for all the guesses of $f_{max}$ and select the one with the minimum cost.

\textbf{Combining Everything:}
Let $\bar{w}$ be our final solution. $Cost(\bar{w}) \leq Cost(w^*)+f_{max} \leq 5 \cdot CostALP_n(w^*)+f_{max} \leq 5\cdot2\cdot\opt{LP} + f_{max} \leq 11OPT$ where $OPT$ is the cost of the optimal solution.

\section{Tri-criteria for LBFLO}
\label{LBkFLO}
In this section, we present a tri-criteria solution for LBFLO problem with $\alpha=(0,1)$-factor violation in lower bound and at most $\beta=(\frac{1}{1-\alpha})$-factor violation in outliers at ($\lambda(\frac{1+\alpha}{1-\alpha})$)-factor loss in cost where $\lambda$ is approximation for FLO. Let $I$ be an instance of LBFLO. For a facility $i$, let $\N{i}$ be the set of $\lb{i}$  nearest clients. We construct an instance $I'$ of FLO with lower bounds ignored and facility costs updated as follows: if a facility $i$ is  opened in optimal solution of $I$, then it pays at least $\sum_{j \in \N{i}}c(i,j)$ cost for serving $\N{i}$ clients. Therefore, $f'(i)=f(i)+\delta \sum_{j \in \N{i}}c(i,j)$ where $\delta$ is a tunable parameter. 
\begin{lemma}
\label{fs}
Optimal solution of $I'$ is bounded by $ (\delta+1)Cost_I(O)$ where $O$ is the optimal solution of  $I$.
\end{lemma}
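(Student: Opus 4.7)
The plan is to exhibit $O$ itself as a witness solution for $I'$ and then bound $Cost_{I'}(O)$ in terms of $Cost_I(O)$; the claimed bound on the optimum of $I'$ then follows by optimality. Since $I'$ is obtained from $I$ by merely dropping the lower-bound constraints and inflating the facility opening costs (the facility set, client set, metric, and outlier budget are all unchanged), $O$ is automatically feasible for $I'$: the at most $t$ clients it leaves unserved still fit inside the outlier budget, and no lower-bound constraint has to be checked in $I'$.

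Next I would decompose the cost of $O$ under the new opening scheme as
\[
Cost_{I'}(O) \;=\; \mathrm{SvcCost}(O) \;+\; \sum_{i \in O} f(i) \;+\; \delta \sum_{i \in O} \sum_{j \in \N{i}} c(i,j),
\]
where the first two terms together are exactly $Cost_I(O)$ and the last term is the only extra charge introduced by $f'$. The heart of the proof is to absorb this extra term into the service cost that $O$ already pays in $I$.

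The key observation is that $O$ respects the lower bounds of $I$: each opened facility $i \in O$ serves some set $S_i$ of clients with $|S_i| \geq \lb{i}$. Since $\N{i}$ is by definition the collection of the $\lb{i}$ clients nearest to $i$, we have $\sum_{j \in S_i} c(i,j) \geq \sum_{j \in \N{i}} c(i,j)$. Summing over all $i \in O$ yields $\sum_{i \in O} \sum_{j \in \N{i}} c(i,j) \leq \mathrm{SvcCost}(O) \leq Cost_I(O)$, so $Cost_{I'}(O) \leq (1+\delta)\, Cost_I(O)$, giving the lemma.

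I do not anticipate any genuine obstacle here: the entire argument rides on the single truncation inequality $\sum_{j \in S_i} c(i,j) \geq \sum_{j \in \N{i}} c(i,j)$, which is immediate from $\N{i}$ being a minimizer of the service cost over $\lb{i}$-element subsets of clients. The only subtleties worth stating carefully are the feasibility of $O$ in $I'$ and the bookkeeping that the entire new charge is already dominated by a $\delta$-fraction of $O$'s own service cost in $I$.
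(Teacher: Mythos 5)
Your proposal is correct and follows essentially the same route as the paper: take $O$ as a feasible solution for $I'$, note that each open $i\in O$ serves at least $\lb{i}$ clients so its service cost dominates $\sum_{j\in \N{i}}c(i,j)$, and hence the extra $\delta$-term in $f'$ is absorbed by $\delta\,Cost_I(O)$. The only difference is that you spell out the truncation inequality explicitly, which the paper states informally when defining $f'$.
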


\begin{proof}
Clearly $O$ is a feasible solution for $I'$. 
Thus, service cost is same as that in $O$. 
And, 
$\sum_{i \in O} f'(i) = \sum_{i \in O}[f(i)+ \delta \sum_{j \in \N{i}}c(i,j)] \leq \delta Cost_I(O)$. 
   Therefore, $Cost_{I'}(I') \leq  (\delta+1)Cost_I(O)$.
\end{proof} \qed

Once we have an instance $I'$ of FLO, we use any algorithm for FLO to get a solution $AS'$ to $I'$ of cost no more than $\lambda Cost_{I'}(O')$ where $O'$ is the optimal solution of $I'$ and $\lambda$ is approximation solution for FLO. Note that a facility $i$ opened in solution $AS'$ might serve less than $\alpha\lb{i}$ clients as we ignored the lower bounds in instance $I'$. We close such facilities and do some reassignments to improve the violation in the lower bounds to $\alpha$;
in the process we make some violation in the number of outliers. 


We convert the solution $AS'$ to a solution $AS$ of LBFLO.
We close every facility $i$ that is serving less than $\alpha\lb{i}$ clients in $AS'$ and either reassign its clients to other opened facilities or decide to leave them unserved. Cost of reassignment is charged to the facility opening costs of the closed facilities.
Consider a facility $i$ opened in $AS'$ that served less than $\alpha \lb{i}$ clients. Let $\iclients$ be the set of clients, in $\N{i}$, assigned to $i$ in $AS'$ and $\bar \iclients$ be the remaining clients in $\N{i}$. 
 Since $i$ serves $<\alpha \lb{i}$ clients, $|\bar \iclients| \ge (1-\alpha )\lb{i}$.
Some of the clients in $\bar \iclients$ are outliers in $AS'$ and some are assigned to other facilities.
 Let $\Out{i}$ be the clients in $\N{i}$ that are outliers and $\R{i}$ be the clients in $\N{i}$ assigned to some other facilities. See Fig.~\ref{division of clients}($a$). 
  If $\R{i} \ne \phi$ then let $j' \in \R{i}$ be the nearest client to $i$. then, 
    \begin{equation}
    \label{eq1}
        c(i,j') \leq \frac{\sum_{j \in \R{i}} c(i,j)}{|\R{i}|} \leq \frac{\sum_{j \in \N{i}} c(i,j)}{|\R{i}|}
    \end{equation}
    
\begin{figure}[t]

	\begin{tabular}{cc}
		\includegraphics[width=5cm]{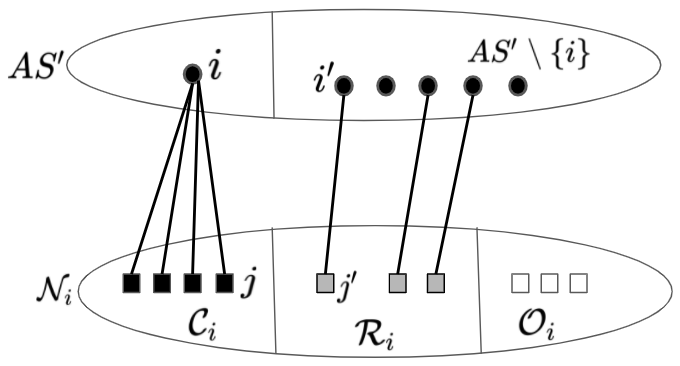}
		& 
		\includegraphics[width=5cm]{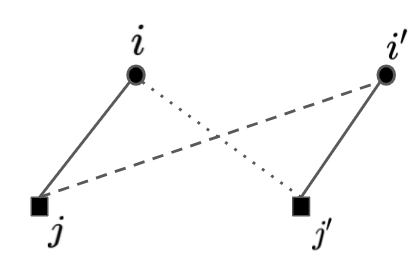}

		\\
		($a$)& \ \ ($b$) \ \
	\end{tabular}
	\caption{ ($a$) Division of clients in $\N{i}$ for a facility $i$ opened in $AS'$
		($b$) $ c(j,i') \leq c(i,j) + c(i,j') + c(j',i')$
	}
	\label{division of clients}
\end{figure}

Clients in $\iclients$ are  assigned to the facilities serving the clients in $\R{i}$ and are made outliers proportionally. That is, we assign $\frac{|\R{i}|}{|\R{i}|+|\Out{i}|} |\iclients|$ clients in  $\iclients$ to the nearest facility $i' \ne i$ opened in $AS'$ and leave  $\frac{|\Out{i}|}{|\R{i}|+|\Out{i}|} |\iclients|$ clients in  $\iclients$ unserved. If $|\R{i}| \neq 0$, then the total cost of reassignment is $\sum_{j \in \iclients} c(j,i') \leq \sum_{j \in \iclients} (c(i,j) + c(i,j') + c(j',i')) \text{(by triangle inequality, see Fig.~\ref{division of clients}($b$))}$ 
    $\leq \sum_{j \in \iclients}c(i,j) + ( \frac{|\R{i}|}{|\R{i}|+|\Out{i}|} |\iclients| \cdot 2 c(i,j'))$ (as $j'$ was assigned to $i'$ and not to $i$ in $AS'$)
    $\leq \sum_{j \in \iclients}c(i,j) + ( \frac{2|\iclients|}{|\R{i}|+|\Out{i}|} \cdot \sum_{i \in \N{i}} c(i,j))$ (using~(\ref{eq1}))
    $\leq \sum_{j \in \iclients}c(i,j) + ( \frac{2\alpha\lb{i}}{(1-\alpha)\lb{i}} \sum_{i \in \N{i}} c(i,j))$ (As $|\iclients| \leq \alpha\lb{i}$ and $|\R{i}|+|\Out{i}| \geq (1-\alpha)\lb{i}$)
    $\leq \sum_{j \in \iclients}c(i,j) + f'(i)$ (for $\delta \geq \frac{2\alpha}{1-\alpha}$).
    Thus the additional cost of reassignment of clients in $\iclients$ is bounded by the facility opening cost of $i$.
    Violation in outliers is $\frac{|\Out{i}| + \frac{|\Out{i}|}{|\R{i}|+|\Out{i}|} |\iclients|}{|\Out{i}|} \leq 1+\frac{|\iclients|}{|\R{i}|+|\Out{i}|} \leq 1+\frac{\alpha}{1-\alpha} = \frac{1}{1-\alpha}$. 

\textbf{Overall Cost Bound:}
It is easy to see that $Cost_I(AS)=Cost_{I'}(AS')$ as cost of solution $AS$ is sum of $(i)$ the original connection cost which is equal to the connection cost of $AS'$, ($ii$) the additional cost of reassignment, which is paid in $AS'$ by facilities that are closed in $AS$ and, ($iii$) the facility cost of the remaining facilities. 
%
%
Thus, $Cost_I(AS)=Cost_{I'}(AS') \leq \lambda Cost_{I'}(O') \leq \lambda (1+\delta)Cost_{I}(O) =
\lambda (\frac{1+\alpha}{1-\alpha})Cost_{I}(O)$ for $\delta=\frac{2\alpha}{1-\alpha}$.  
Using $\lambda=(3+\epsilon)$-approximation of Charikar \etal~\cite{charikar2001algorithms} for FLO, we get $(3+\epsilon)(\frac{1+\alpha}{1-\alpha})$ factor loss in cost for $\epsilon>0$. 

\section{Conclusion and Future Scope}
\label{conclusion}
In this paper, we first presented a $11$-factor approximation for $k$-facility location problem with outliers opening at most $k+1$ facilities. This also gives us the first constant factor approximation for FLO using LP rounding techniques. Our result can be extended to knapsack median problem with outliers with $(1+\epsilon)$ violation in budget using enumeration techniques.

We also gave a tri-crtieria, $(\alpha, \frac{1}{1-\alpha},(3+\epsilon)\frac{1+\alpha}{1-\alpha})$-solution for general LBFLO where $\alpha=(0,1)$ and $\epsilon>0$. It will be interesting and challenging to see if we can reduce the violation in outliers to $<2$ maintaining $\alpha >1/2$. 

We believe that using pre-processing and strengthened LP techniques of Krishnaswamy \etal~\cite{krishnaswamy-kMO} we can get rid of the $+1$ violation in cardinality for $k$FLO. This will also directly extend our tri-criteria solution to lower bounded $k$-facility location problem with outliers (LB$k$FLO). 




\bibliographystyle{splncs04}
\bibliography{ref}

\end{document}